\newcommand{\locx}{\boldsymbol{x}}
\newcommand{\loch}{\boldsymbol{h}}
\newcommand{\vecx}{\boldsymbol{x}}
\newcommand{\vech}{\boldsymbol{h}}
\newcommand{\veczero}{\boldsymbol{0}}
\newcommand{\vecone}{\boldsymbol{1}}
\titlespacing*{\section}{0pt}{1.1\baselineskip}{\baselineskip}
\titlespacing*{\subsection}{0pt}{1.1\baselineskip}{\baselineskip}
\begin{document}
\title{On Spatial Transition Probabilities as Continuity Measures in Categorical Fields }
\author{Guofeng Cao$^{a,}$$^{\ast}$\thanks{$^\ast$ Corresponding author.
		Email: guofeng.cao@ttu.edu \vspace{6pt}} and Phaedon C. Kyriakidis$^{b,}$$^{c}$ and Michael F. Goodchild $^{b}$\\\vspace{6pt}
$^{a}$\textit{Department of Geosciences, Texas Tech University, USA} \\
$^{b}$\textit{Department of Geography, University of California, Santa Barbara,
USA} \\ 
$^{c}$\textit{University of the Aegean, Mytilene, Greece}}

\received{Received August 2011; final version received September 2011}
\maketitle
\begin{abstract}
Models of spatial transition probabilities, or equivalently, transiogram models
have been recently proposed as spatial continuity measures in categorical
fields. In this paper, properties of transiogram models are examined
analytically, and three important findings are reported. Firstly, connections
between the behaviors of auto-transiogram models near the origin and the spatial
distribution of the corresponding category are carefully investigated. Secondly,
it is demonstrated that for the indicators of excursion sets of Gaussian random
fields, most of the commonly used basic mathematical forms of covariogram models
are not eligible for transiograms in most cases; an exception is the exponential
distance-decay function and models that are constructed from it. Finally, a
kernel regression method is proposed for efficient, non-parametric joint
modeling of auto- and cross-transiograms, which is particularly useful for
situations where the number of categories is large.


\end{abstract}

\vspace{0.3cm} \noindent {\bf Keywords:} categorical data, transition
probability, geostatistics, spatial continuity

\section{Introduction}

Categorical spatial data, such as land use and land cover data in geography and
environmental science, rock (lithology) types in earth science and
socio-economic survey data in social sciences, etc., are all-important
information sources across a wide spectrum of scientific fields. As with continuous
spatial data, complex spatial patterns (spatial correlation) exist in such
geo-referenced categorical data in accordance with Tobler's first law of
geography \citep{Tobler1970}. A successful investigation of the statistical
characteristics in these patterns will benefit many of the above mentioned
scientific fields, particularly with respect to spatial data classification or
clustering, spatial uncertainty modeling and spatial scale effects. In
remote sensing imagery classification, for example, spatial pattern information
implied in thematic classes (e.g., forest area is more likely adjacent to
grassland than desert area) can be fully integrated with conventional
classifiers to enhance the classification performance \citep{Tso2001}. 

One of the most fundamental concepts in spatial analysis is the choice of
spatial continuity measures usually quantifying similarity in attribute values
or class labels. In conventional geostatistics, indicator kriging (IK)
\citep{Solow1986} and indicator coKriging (ICK) \citep{Deutsch1998} are the most
frequently used methods for estimating the posterior (conditional) probability
of class occurrence at any unsampled location conditioned on the available
observed data. Both IK and ICK rely on two-point spatial continuity measures,
indicator (cross)covariance or (cross)variogram models, for characterizing
spatial association in categorical spatial data. Although covariances and
variograms are suitable for continuous fields, particularly Gaussian random
fields, the discrete characteristics of categorical data, along with their sharp
boundaries and complex spatial patterns, render the interpretation of such
covariances and variograms less intuitive. This, in turn, hinders the
applications of the kriging family of methods for deriving probabilities of
class occurrence in categorical fields. 

Recently, promising alternatives to the indicator (cross-)variogram, namely,
spatial transition probabilities \citep{Carle1996}, or equivalently,
transiograms \citep{Li2006a}, have been proposed as alternative spatial
continuity measures in categorical fields. The concept of transition probability
is not new; but it has only recently been proposed as a continuity measure in
categorical fields.
Compared to indicator covariances and variogram models, transiograms are more
interpretable in categorical fields, and easier to integrate with ancillary
information \citep{Carle1996}. Based on this concept, \citet{Carle1996}
reformulated IK and ICK as systems of spatial transition probabilities according
to their analytical connections with indicator covariograms. 
More recently, \citet{Li2007f} employed a single Markov Chain moving randomly
within a stationary random field (Markov Chain Random Field) for conditional
simulation or interpolation of categorical spatial data. Class occurrence
probabilities derived by such methods usually satisfy the fundamental
probability constraints naturally compared to methods based on variations of IK.

As an extension of transition probabilities in a spatial setting, transiograms
naturally inherit basic properties of two-point conditional probabilities, such
as asymmetry, non-negativity and unit-sum \citep{Carle1996, Carle1997}. 
The relationships between the parameters of transiogram models and the
information on class proportion, mean length, and class juxtaposition have been
investigated by \citet{Carle1996}, and this information actually provides an
interpretation of the behavior of transiogram models and eventually offers a
guideline for the construction of such models by incorporating expert knowledge
of the spatial distribution of the categories under study. Along these lines,
one potential contribution of this paper is to investigate the connections
between the behavior of auto-transiogram models near the origin and the spatial
distribution of the associated category.

As with variograms, not every function of distance can serve as a valid
transiogram. Several basic mathematical models of variograms, such as circular,
spherical, exponential, Gaussian, and cosine-Gaussian, have been proposed for
transiogram modeling \citep{Li2006}. No discussion, however, has yet been made
on whether these valid variogram functions can be eligible for transiograms
under certain circumstances. In this paper, the validity of transiogram models
in the stationary indicator random fields, particularly the excursion sets of
Gaussian Random Fields (GRFs), is discussed and it is found that in most cases,
only the exponential form and several of its variants are eligible for
transiogram modeling. 

On another front, even if valid transiograms are assumed to be available, or in
cases where the assumption of stationary indicator random fields does not apply,
e.g., in a Markov Chain Random Field (MCRF) \citep{Li2007f}, transiogram model
fitting from empirical values can become tedious as the number of classes
increases, since there are $K^2$ auto- and cross-transiogram curves to be
jointly modeled for a set of $K$ classes. In practice, one often finds that
basic parametric transiogram models (usually defined by a set of parameters
including range, sill, anisotropy and etc.) cannot capture the irregular
fluctuations at small scales (e.g., hole effect) often found in empirical
transiograms. Incorporating this information in transiogram models could
dramatically increase the number of (unknown) parameters to be estimated
\citep{Li2011}, and renders the quantitative fitting of such models infeasible.
To address these computing and modeling issues, this paper proposes a kernel
regression-based non-parametric fitting procedure for efficient and consistent
transiogram modeling.

The remainder of this paper is organized as follows: the behavior of
auto-transiograms near the origin is examined in Section 3 after briefly
reviewing the concepts of transiograms in Section 2. Section 4 is devoted to
investigating the validity of basic transiogram models in the excursion sets of
GRFs, and the kernel regression based non-parametric transiogram model fitting
method is presented in Section 5. Finally, section 6 concludes the paper and
provides some discussion.

\section{Basic concepts of spatial transition probabilities}

Consider a $d$-dimensional geographical region $\Omega$ which is partitioned into
$N$ disjoint subregions with a categorical random variable (RV) $C(\vecx)$
($\vecx\in R^d$) which can take one out of $K$ mutually exclusive and
collectively exhaustive class labels $c(\vecx)\in \{1,\ldots,K\}$ at any
arbitrary location with coordinate vector $\vecx$. Alternatively, one can also
define an indicator variable $I_k(\vecx)$ to represent $C(\vecx)$, where
$I_k(\vecx)=1$ if $C(\vecx) = k$ and $I_k(\vecx)=0$ otherwise. 

Given two locations $\vecx$ and $\vecx'$ in $\Omega$, and the associated class
labels denoted as $k$ and $k'$, the transiogram $\pi_{k'|k}(\vecx',\vecx)$ is
typically a parametric model of transition probabilities as a function of the
lag vector $\vech=\vecx'-\vecx$. In words, $\pi_{k'|k}(\vecx',\vecx)$ is the
probability of, starting from a source location $\vecx$ with class label $k$,
arriving at a destination location $\vecx'$ with class label $k'$.  Note that
since the dimension of $\Omega$ is greater than $1$,  there are theoretically
infinite paths to reach a destination location $\vecx'$ from a source location
$\vecx$. This equifinality issue hinders the application of the original
definition of 1D transition probability and the rich Markov chain theory based
on it, such as the celebrated Chapman-Kolmogrov equation, to high dimensional
spaces. To eliminate ambiguity, the definition of spatial transition
probabilities is restricted to the path defined by the vector
$\vech=\vecx'-\vecx$. Specifically, spatial transition probabilities could be
defined as:

\begin{eqnarray}
\pi_{k'|k}(\vech) &=&P\{C(\vecx')=k'|C(\vecx)=k\}\nonumber \\
 &=& P\{I_{k'}(\vecx') = 1|I_k(\vecx) = 1\}\nonumber \\
 &=& \frac{P\{I_{k'}(\vecx') = 1 \;and\; I_k(\vecx) = 1\}}{P\{I_k(\vecx) = 1\}}
\label{eq.tp}
\end{eqnarray}

Second-order or intrinsic stationarity \citep{Chiles1999}
is implicitly assumed in this definition, since the value of $\pi_{k'|k}(\vech)$
depends only on the lag $\vech$ and not on the location $\vecx$ or $\vecx'$.
More specifically, $\pi_{k|k}(\vech)$ denotes the {\it auto-transiogram} for class
$k$ (when $k=k'$), a measure of spatial auto-correlation of class $k$, and
$\pi_{k'|k}(\vech)$ denotes the {\it cross-transiogram} from class $k$ to class
$k'$ (when $k\neq k'$), a measure of spatial cross-correlation between class $k$
and class $k'$. Conventionally, class $k$ and class $k'$ in $\pi_{k'|k}(\vech)$
are called {\it tail class} and {\it head class} respectively.  

Under the assumption of second-order stationarity, sample transiograms can be
obtained by direct computation (exhaustive sampling, no parametric model involved) from
sample data on a regular grid whose node spacing coincides with the scale
of analysis. Thus for a given $\vech$, we have: 

\begin{equation}
\hat{\pi}_{k'|k}(\vech) \simeq \frac{1}{\pi_k} E\{I_k(\locx)I_{k'}(\locx+\vech)\}
\simeq \frac{1}{\pi_kN(\vech)} \sum_{n=1}^{N(\vech)}[i_k(\vecx)i_{k'}(\vecx+\vech)]
\label{eq.empirical}
\end{equation}

\noindent where $N(\vech)$ denotes the number of location pairs separated by
vector $\vech$ and $\pi_k$ indicates the proportion of class $k$.
Figure.\ref{fig.sampletp} provides an area-class map with three categories, and
for a certain lag distance $\vech$, the associated transiograms ($3\times 3$)
values are computed by exhaustively enumerating the pairs of nodes separated by a
template vector $\vech$ in the whole sample map.

\begin{figure}
\hspace*{3cm}\scalebox{0.8}{\includegraphics{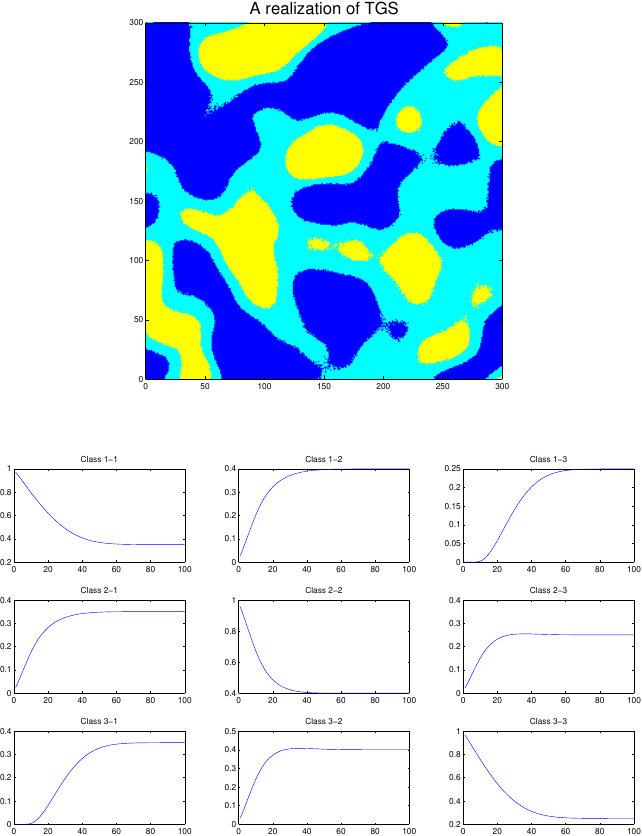}}
\caption{A sample area-class map (top) with three classes and its associated
transiograms curves (bottom) are computed by exhaustive sampling (scanning).
All two-point dependence information of the area-class map is encapsulated in
these ($3\times 3$) empirical transiogram curves.}
\label{fig.sampletp}
\end{figure}

If a (latent) probabilistic distribution is assumed underpinning the
(observable) categorical field, e.g., a truncated multivariate Gaussian field, 
all auto- and cross-transiograms can be computed exactly according to the threshold values
associated with each class and the analytical form of the latent distribution \citep{Chiles1999}.

The basic models of variograms as well as the classical geostatistical concepts
of {\it range}, {\it sill}, {\it hole effect}, and {\it anisotropy}, have been
discussed in the context of transiograms \citep{Li2006a}. 
Given a lag vector $\vech$, transiogram values (spatial transition
probabilities) have the following basic properties:

\begin{itemize}
\item {\it asymmetry}
\begin{equation}
\pi_{k'|k}(\loch)\neq \pi_{k'|k}(-\loch)
\label{eq.asymetry}
\end{equation}
\item {\it non-negativity}
\begin{equation}
\pi_{k'|k}(\loch) \geq 0, \forall k,k' 
\label{eq.nonnegativity}
\end{equation}
\item {\it unit-sum}
\begin{equation}
\sum_{k'=1}^K \pi_{k'|k}(\loch)=1 
\label{eq.sum2one}
\end{equation}
\item {\it value at zero distance}
\begin{equation}
\pi_{k'|k}(0)=\left\{\begin{array}{cc}
1&\mbox{if $k=k'$}\\
0&\mbox{if $k\neq k'$}
\end{array}
\right.
\label{eq.behaviors}
\end{equation}
\end{itemize}

With the basic concepts and properties of the transiogram models reviewed in
this section, the remainder of this paper will investigate the additional
important properties of these models including the properties near the origin,
validity of the transiogram models as well as the fitting procedures of these
models.


\section{Behaviors of Transiogram Models Near the Origin}

As in covariograms, the shape (e.g., regularity) of transiograms reflects the
spatial continuity and interaction of categories. In Figure.\ref{fig.sampletp},
for example, the cross-transiogram curves between class $1$ (represented as
$blue$ in Figure.\ref{fig.sampletp}) and class $3$ (represented as $cyan$ in
Figure.\ref{fig.sampletp}) stay at $0$ for a certain distance before they begin
to increase gradually. This transiogram behavior is because class $1$ is never
adjacent to class $3$ in the reference map of Figure.\ref{fig.sampletp}, and
transiogram values should increase after the minimum distance between pixels of
class $1$ and class $3$. This connection between transiogram curves and the
shape of objects, particularly the first derivative of an auto-transiogram curve
at the origin and the shape of objects of the category associated with that
auto-transiogram, is examined in detail in this section.

We start with two particular properties of the indicator representation of
categorical spatial variables:

\begin{equation}
P\{I_k(\locx) = 1\} = E\{I_k(\locx)\}
\label{eq.I}
\end{equation}
and
\begin{equation}
P\{I_k(\locx) = 1 \; and\;
I_{k'}(\locx+\vech)=1\}=E\{I_k(\locx)I_{k'}(\locx+\vech)\}
\label{eq.II}
\end{equation}
\noindent where $E\{I_k(\locx)I_{k'}(\locx+\vech)\}$ is known as the
non-centered indicator cross-covariance or probabilistic version of geometric
covariogram \citep{Lantuejoul2002}. 

Suppose there is a circular region $A$ with category $k$ shifted by $\vech$ to
region $A_h$ as illustrated in Figure.\ref{fig.transition}. In this case, the
area of intersection $A\bigcap A_h$ (blue domain) represents
$E\{I_k(\locx)I_{k}(\locx+\vech)\}$, and the area of $A$ (blue domain and red
domain) represents $P\{I_k(\locx)= 1\}$ if the area of the whole region is assumed
to be $one$. Thus according to the definition of the transiogram
(Equation.\ref{eq.tp}), $\pi_{k|k}(h)$ can be seen as the area of the blue
domain divided by the area of the whole circle or the proportion of the blue
domain in the white or red domain.
\begin{figure}
\begin{center}
\scalebox{0.6}{\includegraphics{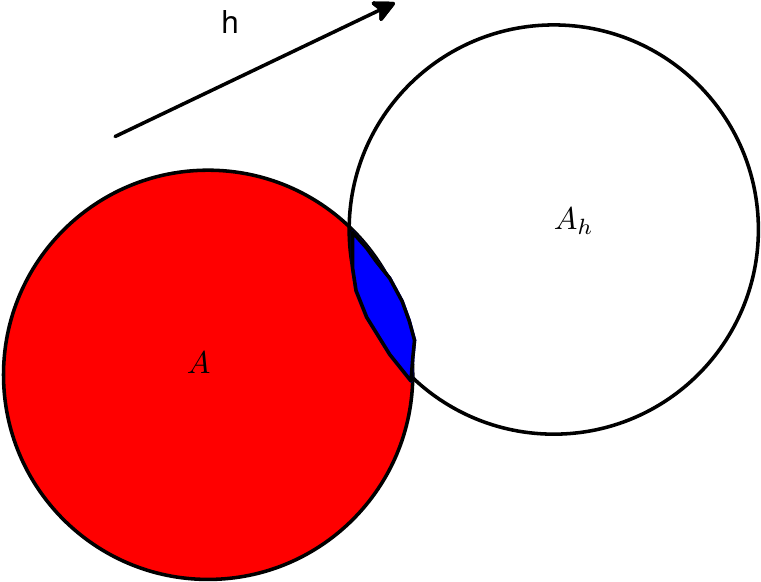}}
\caption{An illustration of non-center auto-transiogram}
\label{fig.transition}
\end{center}
\end{figure}
If the class proportions $E\{I_k(\locx)\}$ are assumed to be constant, one
arrives to the analytical links between the transiogram and the
indicator (cross-)covariogram and indicator (cross-)variogram by applying the
properties of indicators (Equation.\ref{eq.I} and Equation.\ref{eq.II}) to the
definition of the (cross-)covariogram (Equation.\ref{eq.tpcov}) and
(cross-)variogram (Equation.\ref{eq.tpcrossvar}) respectively \citep{Carle1996}: 

\begin{equation}
\sigma_{kk'}(\vech) = \pi_k[\pi_{k'|k}(\vech) -\pi_{k'}]
\label{eq.tpcov}
\end{equation}
\begin{equation}
\gamma_{kk'}(\vech) = \pi_k\{\pi_{k'|k}(\veczero) - [\pi_{k'|k}(\vech) +
\pi_{k'|k}(-\vech)]/2\} 
\label{eq.tpcrossvar} 
\end{equation} 

\noindent where $\sigma_{kk'}(\vech)$ represents indicator (cross-)covariogram
and $\gamma_{kk'}(\vech)$ represents indicator (cross-)variogram and
$\pi_{k'|k}(-\vech)$ represents the transition probability in the opposite
direction of $\vech$. Particularly if we let $k=k'$ and $\pi_{k|k}(0)=1$, 
we have a simple linear connection between the auto-transiogram and the indicator
auto-variogram: 
\begin{equation}
\pi_{k|k}(\vech)=1-\frac{\gamma_{kk}(\vech)}{\pi_k}
\label{eq.tpvar} 
\end{equation}

Because of the linear connection between the transiograms and the indicator
(cross-)variogram/covariogram (Equation.\ref{eq.tpcov} and
Equation.\ref{eq.tpvar}), transiograms share the properties of indicator
(cross-)variogram/covariograms. As Figure.\ref{fig.sampletp} illustrates,
auto-transiograms (diagrams on the diagonal in Figure.\ref{fig.sampletp}) start
from $1$ at $\vech=0$ and gradually decrease to the $sill$ value at $\vech$
equals $range$, i.e., $\lim_{\vech\rightarrow \infty} \pi_{k|k}(\vech)= \pi_k$.
Cross-transiograms (diagrams off the diagonal in Figure.\ref{fig.sampletp}) start
from $0$ at $\vech=0$ and gradually increase to the $sill$ value at $\vech$
equals $range$, i.e., $\lim_{\vech\rightarrow \infty} \pi_{k'|k}(\vech)=
\pi_{k'}$. 

The compactness of a geographic shape is an important property of a polygon in
GIS/zoning and landscape metrics. One of the simplest compactness measures or
indices of a shape is the ratio of its perimeter to its area (perimeter-to-area
ratio) $\Psi$ \citep{Smith2007}. It is well known in the literature that the
first derivative of the variogram at the origin is related to the derivative or
gradient of the surface it represents \citep{Stein1999,Chiles1999}.
\citet{Carle1996} have shown that in a 1D continuous-space Markov chain, the
first derivative of the auto-transiogram at the origin,
$\pi'_{k|k}(\veczero;\phi)$, termed {\it transition rate}, is related to the
{\it mean length} or {\it mean thickness} of the objects of category $k$ in
direction $\phi$. The empirical transition rate is usually calculated by the
total length of category $k$ in direction $\phi$ divided by the number of
embedded occurrences of $k$. In this paper, the relationship between the
auto-transiogram for a certain class label and the perimeter-to-area ratio of
shapes with such a class label in a 2D geographical space is given via the
following proposition.

\begin{proposition}
Under a stationary proportions assumption, the perimeter-to-area ratio $\Psi_k$ of the
objects of category $k$ in a 2D random sets with $unit$ area can be obtained by integrating the derivative
$\pi_{k|k}(0)$ in the direction $\phi$ at the origin over all possible directions:
\begin{equation}
\Psi_k=-\frac{1}{2}\int_{0}^{2\pi}\pi'_{k|k}(\veczero;\phi)d\phi
\label{eq.prop} 
\end{equation}
\label{prop.1}
\end{proposition}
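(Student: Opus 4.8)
The plan is to connect the directional derivative of the auto-transiogram at the origin to the boundary length of the random set via the geometric covariogram, then integrate over all directions to recover the total perimeter. The key engine will be the identity \eqref{eq.tpvar}, namely $\pi_{k|k}(\vech) = 1 - \gamma_{kk}(\vech)/\pi_k$, which reduces the claim to a statement about the directional derivative of the indicator variogram at the origin. Differentiating gives $\pi'_{k|k}(\veczero;\phi) = -\gamma'_{kk}(\veczero;\phi)/\pi_k$, so the right-hand side of \eqref{eq.prop} becomes $\tfrac{1}{2\pi_k}\int_0^{2\pi}\gamma'_{kk}(\veczero;\phi)\,d\phi$. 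Since $\pi_k$ is the area of the set (here normalized so the total region has unit area, so $\pi_k$ is the proportion, i.e. the expected area), the problem is now to show that integrating the directional derivative of the indicator variogram at the origin over all directions yields $2\Psi_k\pi_k$, i.e. proportional to the perimeter.

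First I would recast the non-centered covariogram $E\{I_k(\locx)I_k(\locx+\vech)\}$ as the expected area of the intersection $A \cap A_{\vech}$ of the random set $A$ with its translate by $\vech$, exactly as the geometric-covariogram interpretation in Figure~\ref{fig.transition} suggests. Writing $g_k(\vech) = E\{\,\mathrm{area}(A\cap A_{\vech})\,\}$, the well-known result from stochastic geometry (the derivative of the geometric covariogram at the origin) states that for a set with a rectifiable boundary, the directional derivative satisfies $g_k'(\veczero;\phi) = -\tfrac{1}{\pi}\,\overline{L}\,\langle\text{projection factor}\rangle$; more precisely, the one-sided directional derivative of the covariogram at the origin in direction $\phi$ equals (up to sign) the total projected boundary measure in the direction perpendicular to $\phi$. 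Concretely, for a single convex object the lost area when shifting by a small $h$ in direction $\phi$ is to first order $h$ times the width of the object's ``shadow'' orthogonal to $\phi$, so $-g_k'(\veczero;\phi)$ equals the projected length of the boundary onto the line perpendicular to $\phi$.

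The central step is then a Cauchy-type formula: integrating the projected boundary length over all directions recovers the total perimeter. Specifically, for each boundary element $ds$ with outward normal making angle $\theta$ with the shift direction $\phi$, its contribution to the instantaneous area change is proportional to $|\cos(\theta-\phi)|\,ds$, and $\int_0^{2\pi}|\cos\alpha|\,d\phi$ over a full revolution yields a constant multiple of the arc length. Carrying out $\int_0^{2\pi}|\cos(\theta-\phi)|\,d\phi = 4$ for each fixed boundary element and assembling via the relation $\gamma_{kk}(\vech) = \pi_k - g_k(\vech) + (\text{const})$ (so that $\gamma'_{kk}(\veczero;\phi) = -g_k'(\veczero;\phi)$), I would obtain $\int_0^{2\pi}\gamma'_{kk}(\veczero;\phi)\,d\phi = 2\,\mathrm{Perimeter}(A)$, whence the factor $\tfrac{1}{2\pi_k}$ combined with $\mathrm{Perimeter}/\mathrm{Area} = \Psi_k$ delivers \eqref{eq.prop} after noting the area is $\pi_k$ under the unit-total-area normalization.

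The hard part will be making the differentiation of the geometric covariogram at the origin rigorous and correctly bookkeeping the geometric constant. The directional derivative of $g_k$ at $\veczero$ involves a one-sided limit that is genuinely sensitive to the regularity of the boundary (it exists for sets of finite perimeter but the clean Cauchy formula requires controlling the projected measure), and the sign conventions together with the $|\cos|$ integration constant are exactly where an erroneous factor of $2$ or $\pi$ creeps in. I would therefore verify the normalization on the explicit circular example of Figure~\ref{fig.transition}: for a disk of area $\pi_k$ (radius $r=\sqrt{\pi_k/\pi}$), the perimeter-to-area ratio is $\Psi_k = 2/r$, and computing $\pi'_{k|k}(\veczero;\phi)$ directly from the lens-intersection area of a disk with its translate gives a direction-independent value, so that the angular integral reproduces $\Psi_k$ exactly and pins down all constants.
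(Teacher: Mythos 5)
Your proposal is correct, and it follows the same overall skeleton as the paper: reduce the claim via the linear transiogram--variogram/covariogram link (Equation.\ref{eq.tpvar}, equivalently Equation.\ref{eq.tpcov}) to a statement about the directional derivative of the geometric covariogram at the origin, integrate over all directions to recover the perimeter, and normalize by the class area $\pi_k$ under the unit-total-area stationary proportions assumption. The one genuine difference is at the key lemma: the paper simply cites Minkowski's formula $l_k=-\frac{1}{2}\int_{0}^{2\pi}K'_{\phi}(0)\,d\phi$ from Matheron/Lantu\'ejoul as a black box, whereas you re-derive it from first principles via the boundary-strip argument and a Cauchy-type projection integral. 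Your constant bookkeeping checks out: the area lost under a small shift in direction $\phi$ is, to first order, $h$ times $\frac{1}{2}\int_{\partial A}|\cos(\theta-\phi)|\,ds$ (the factor $\frac{1}{2}$ coming from the divergence-theorem symmetrization, since only the trailing half of the boundary sheds area), and combining with $\int_{0}^{2\pi}|\cos(\theta-\phi)|\,d\phi=4$ gives exactly $\int_{0}^{2\pi}\gamma'_{kk}(\veczero;\phi)\,d\phi=2\,l_k$, hence $-\frac{1}{2}\int_{0}^{2\pi}\pi'_{k|k}(\veczero;\phi)\,d\phi=l_k/\pi_k=\Psi_k$. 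What each route buys: the paper's proof is a two-line substitution once the stereological formula is granted, while yours is self-contained, makes the regularity hypotheses explicit (rectifiable boundary, existence of the one-sided directional derivative --- assumptions the paper leaves implicit in the citation), and your proposed disk verification coincides with the $R=0.25$ example the paper itself uses as a sanity check of Equation.\ref{eq.psi}. The only slight imprecision is your phrase equating $-g_k'(\veczero;\phi)$ with ``the projected length of the boundary'' --- it is the shadow width, i.e.\ half the boundary projection counted with multiplicity --- but your final constants show you resolved this correctly.
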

\begin{proof}
In a 2D space, the perimeter $l_k$ of objects of category $k$ can be obtained by the
application of {\it Minkowski's formula} \citep{Matheron1971,Lantuejoul2002}:

\begin{equation} l_k=-\frac{1}{2}\int_{0}^{2\pi}K'_{\phi}(0)d\phi \nonumber
\end{equation} \noindent where $K(\cdot)$ is the probabilistic version of
geometric covariogram for category $k$ \citep{Lantuejoul2002}. Replacing the
covariogram with the transiogram according to Equation.\ref{eq.tpcov}, we have:
\begin{equation} l_k=-\frac{\pi_k}{2}\int_{0}^{2\pi}\pi'_{k|k}(0;\phi)d\phi
\nonumber \end{equation} Under the stationary proportions assumption, $\pi_k$ is
proportional to the area of objects of category $k$, and without loss of
generality, one lets the total area equal $one$ and the area of objects of
category $k$ is thus $\pi_k$.  Equation.\ref{eq.prop} is obtained per the
definition of the perimeter-to-area ratio.  
\end{proof}

For the isotropic case, $\Psi_k$ can then be simply written as:
\begin{equation}
\Psi_k=-\pi\times\pi'_{k|k}(\veczero)
\label{eq.psi}
\end{equation}

If $\pi'_{k|k}(\veczero) =\infty$ in particular, it is anticipated that the
boundaries of category $k$ demonstrate fractal properties. A degenerate case
is when $\pi_{k|k}(\vech)$ is parabolic with $\pi'_{k|k}(\veczero) =0$. 

It is worth noting that, in general, $\Psi_k$ is scale dependent and it is a
particular average shape descriptor (of union) of objects of category $k$
instead of a single object; in other words, $\Psi_k$ corresponds to the {\it
mean perimeter-to-area ratio} in landscape metrics \citep{McGarigal1995}. The
conclusion of the Proposition.\ref{prop.1} is for the closed objects. If the union
object is open, the $boundary$ of the studied area will be taken into account.
This note applies to both convex and concave shapes.

Proposition.\ref{prop.1} can be verified by the circular region (the radius of
this circle $R$ is assumed to be $0.25$) in a map with unit area
(Figure.\ref{fig.transition}). The perimeter-to-area ratio $\Psi_A=\frac{2}{R}=8$
and the auto-transiogram for category $k$ can be written as (the area of blue
domain divided by the area of the circle):  

\begin{equation}
\pi_{k|k}(\vech)=
\frac{2}{\pi}\{\arccos(|\frac{\vech}{2R}|)-|\frac{\vech}{2R}|\sqrt{1-|\frac{\vech}{2R}|^2}\}
\mbox{ if } |\vech|\leq R \nonumber
\label{eq.test}
\end{equation}

Taking the first derivative of Equation.\ref{eq.test} with respect to $\vech$
and letting $\vech=0$, one gets $\pi'_{k|k}(0)=-\frac{2}{\pi R}=-\frac{8}{\pi}$
and Equation.\ref{eq.psi} is obviously satisfied.

Proposition.\ref{prop.1} can further be illustrated in
Figure.3. The $red$ curve in Figure.3
(a) represents the empirical auto-transiogram of class $3$ (represented as
$yellow$) in the sample map Figure.3 (b), and $green$ curve
represents that of the sample map Figure.3 (c). The areas of
these two sample maps are exactly the same and the proportions of class $3$
($yellow$ regions) are both about $0.40$. By comparing
Figure.3 (b) and (c), one can easily check that the average
perimeter of $yellow$ regions in (c) is much larger than that of (b). Thus
according to the proposition.\ref{prop.1}, the auto-transiogram of class $3$
(the $red$ curve) in Figure.3 (b) should have a larger first
derivative (slope) at the origin than that (the $green$ curve) of (c), which is
evident by the two curves in Figure.3 (a).

\begin{figure}
	\hspace*{1cm}\scalebox{0.7}{\includegraphics{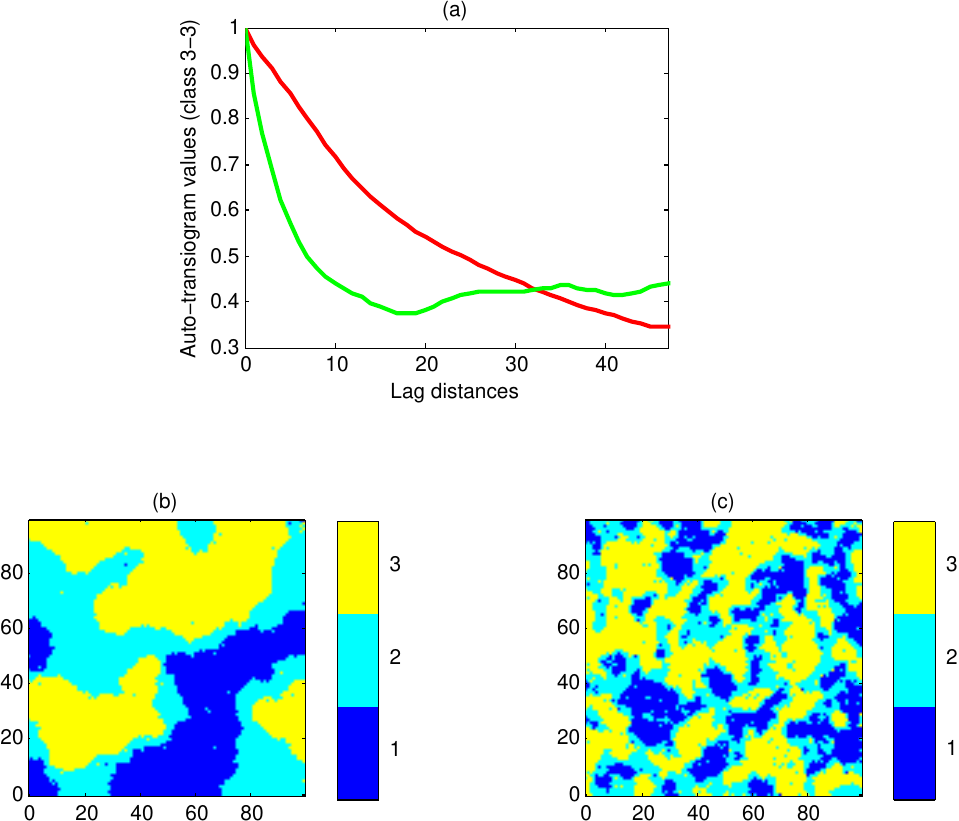}}
\label{fig.prop.sample1}
\caption{An illustration of connections between the first derivative of
auto-transiograms at the origin and the shape of objects. (a) Empirical
auto-transiograms of class $3$ in sample area-class maps (b) and
(c) (b) a sample area-class map with three classes, and the empirical auto-transiogram of class $3$
(represented as $yellow$) is the $red$ curve in (a)
(c) a sample area-class map with three classes, and the empirical auto-transiogram of class $3$
(represented as $yellow$) is the $green$ curve in (a). }
\end{figure}

In this section, the analytical connection between the auto-transiogram of a
certain category, particularly its first derivative at the origin, and the shape
metrics of the objects with this category in a categorical field is carefully
investigated. It provides a quantitative interpretation of the behaviors of
transiogram models. More importantly, together with other properties of
transiograms \citep{Carle1997, Li2006a}, it provides analytical instructions for
incorporating domain experts knowledge in transiogram-based applications.
\section{Valid Transiogram Models}

As in Kriging systems, one often needs to fit empirical transiogram values to
certain parametric transiogram models in the transiogram-based methods
\citep{Carle1996, Li2007f}. Several theoretical models of variograms, such as
triangular, circular, spherical, exponential and Gaussian, have been proposed as parametric
models of transiograms \citep{Li2006a,Li2007g} without checking their validity
under certain circumstances. In this section, this validity is investigated in
the stationary indicator random fields, particularly for indicators of excursion
sets of GRFs, which is commonly used in geostatistics and spatial uncertainty modeling.

Let ${\bf Z}=\{Z(\vecx)\} $ be a stationary GRF with a correlogram $\rho(\vech)$. Oftentimes, the indicators at location
$\vecx$ $i_k(\vecx)$ can be obtained by truncating $Z(\vecx)$ by a specification
of a cut-off value $z_k$, i.e.,
\begin{equation}                                                                                                                                                                               
I_k(\vecx)= \left\{\begin{array}{rl}                                                                                                                                                           
  1 &\mbox{if $Z(\vecx)\geq z_k$}\\                                                                                                                    
  0 &\mbox{if otherwise}                                                                                                                                                                       
 \end{array} \right.                                                                                                                                                                           
\label{eq.definition}
\end{equation}    

It is of interest to determine whether the commonly-used triangular, circular,
spherical, exponential and Gaussian variograms can be used to model the
auto-transiogram of $I_k(\vecx)$. 

%
%
%

The triangular inequality for three random variables $Z(\vecx)$, $Z(\vecx+\vech)$
and $Z(\vecx+\vech+\vech')$ in a stationary random field is written as:

\begin{equation}
|Z(\vecx+\vech+\vech')-Z(\vecx)|\leq |Z(\vecx+\vech+\vech')-Z(\vecx+\vech)| +
|Z(\vecx+\vech)-Z(\vecx)|  \nonumber
\end{equation}
For indicator variables in particular, one has:
\begin{equation}
|I_k(\vecx+\vech+\vech')-I_k(\vecx)|\leq
|I_k(\vecx+\vech+\vech')-I_k(\vecx+\vech)| + |I_k(\vecx+\vech)-I_k(\vecx)| \nonumber
\label{eq.triangle}
\end{equation}

Moreover, $|I_k(\vecx+\vech)-I_k(\vecx)|=|I_k(\vecx+\vech)-I_k(\vecx)|^2$, which
leads to a necessary condition for a valid indicator variogram:
\begin{equation}
\gamma_{kk}(\vech+\vech') \leq \gamma_{kk}(\vech)+\gamma_{kk}(\vech')  
\label{eq.trianglegamma}
\end{equation}

A necessary condition for a valid auto-transiogram can thus be obtained by its
analytical connections with an indicator auto-variogram (Equation.\ref{eq.tpvar}):
\begin{equation}
\pi_{k|k}(\vech+\vech') \geq \pi_{k|k}(\vech)+\pi_{k|k}(\vech')-1
\label{eq.triangletp}
\end{equation}

According to the triangular inequality of auto-transiogram
(Equation.\ref{eq.triangletp}), the Gaussian form (Equation.\ref{eq.gaussian})
cannot be a valid auto-transiogram, since if $\vech=\vech'$, by the
Equation.\ref{eq.trianglegamma} and Taylor expansion, we have $4|\vech|^2\leq
2|\vech|^2$ when $|\vech|\rightarrow 0$, which is obviously a contradiction. 

\citet{Matheron1993} provided a more general necessary condition (containing the
triangular inequality) for eligible indicator variograms $\gamma_{kk}(\vecx,\vecx')$: for any set of $m$ ($m\geq2$) points
$\vecx_1,\ldots,\vecx_m$ with category $k$, and values $\epsilon_i \in
\{-1;0;1\}, i = 1,\ldots, m$, such that $\sum_{i=1}^{m}\epsilon_i=1$, the associated variogram values
$\gamma_{kk}(\vecx_i,\vecx_j)$ must satisfy:
\begin{equation} 
\sum_{i=1}^{m}\sum_{j=1}^{m}\epsilon_{i}\epsilon_{j}\gamma_{kk}(\vecx_i,\vecx_j) \leq 0  
\label{eq.matheron}
\end{equation}

Or equivalently, in terms of auto-transiogram values, one has : 
\begin{equation}
\sum_{i=1}^{m}\sum_{j=1}^{m}\epsilon_{i}\epsilon_{j}(1-\pi_{k|k}(\vecx_i,\vecx_j)) \leq 0  
\label{eq.matheron.tp} 
\end{equation}

%

It is still an open question whether the necessary condition is also sufficient
for eligible indicator variograms of general random sets.  \citet{Emery2010}
recently pursued this question further by suggesting that the properties of
triangular, circular, and spherical variograms are rather restrictive in two or
three dimensional indicator random fields, and proved that these three
variograms are not valid indicator variograms for excursion sets of stationary
GRFs. Due to the linear connection between indicator variograms and
auto-transiograms, we can hence conclude that the Gaussian
(Equation.\ref{eq.gaussian}), triangular (Equation.\ref{eq.triangular}),
spherical (Equation.\ref{eq.spherical}) and circular
(Equation.\ref{eq.circular}) models cannot be valid basic forms of
auto-transiograms for indicators of excursion sets of stationary GRFs.

\begin{itemize}
\item {\it Gaussian}
\begin{equation}
\pi_{k|k}(\vech) = 1-(1-\pi_k)\{1-\exp(-(\frac{|\vech|}{a})^2)\}
\label{eq.gaussian}
\end{equation}
\item {\it Triangular}
\begin{equation}
\pi_{k|k}(\vech) = \left\{ 
\begin{array}{cr}
1-(1-\pi_k)\frac{|\vech|}{a} & \mbox{ if $\frac{|\vech|}{a} \leq 1$} \\
\pi_k & \mbox{$o.w.$}
\end{array} \right.
\label{eq.triangular}
\end{equation}
\item {\it Spherical}
\begin{equation}
\pi_{k|k}(\vech) = \left\{ 
\begin{array}{cr}
1-(1-\pi_k)\{1.5\frac{|\vech|}{a}-0.5(\frac{|\vech|}{a})^3\} & \mbox{ if $\frac{|\vech|}{a} \leq 1$} \\
\pi_k & \mbox{$o.w.$}
\end{array} \right.
\label{eq.spherical}
\end{equation}
\item {\it Circular}
\begin{equation}
\pi_{k|k}(\vech) = \left\{
\begin{array}{cr}
1-(1-\pi_k)\{1-\frac{2}{\pi}[\arccos(\frac{|\vech|}{a})-\sqrt{1-(\frac{|\vech|}{a})^2}]\} & \mbox{ if $\frac{|\vech|}{a}\leq 1$}\\
\pi_k & \mbox{$o.w.$}
\end{array} \right.
\label{eq.circular}
\end{equation}
\end{itemize}
\noindent where $a$ represents the range parameter of the model.

To illustrate this conclusion, particularly for spherical cases,  we assume that
a spherical form (Equation.\ref{eq.spherical}) with $a=1$ is used to model the
auto-transiogram of indicator variables $I_k(\locx)$ truncated by
Equation.\ref{eq.definition} from a stationary GRF with unit variance and a
correlogram $\rho(\vech)$. The class proportion $\pi_k$ can be written as
$1-\Phi(z_k)$ with $\Phi(\cdot)$ indicates the $cdf$ (cumulative distribution
function) of Gaussian distribution.  This auto-transiogram is equivalent to a
spherical auto-variogram with sill $\Phi(z_k)(1-\Phi(z_k))$ and range $a = 1$.
From another perspective, the auto-variogram of the indicator $I(\locx)$ can be
obtained by a function of $\rho(\vech)$ \citep{Chiles1999}: 
\begin{equation}
\gamma_{kk}(\vech) = \frac{1}{2\pi} \int_{\rho(\vech)}^{1}
\exp(-\frac{z^2}{1+u})\frac{du}{\sqrt{1-u^2}}
\label{eq.gammarho}
\end{equation}
One can thus have the corresponding correlation function $\rho(\vech)$ of the GRF by inverting
Equation.{\ref{eq.gammarho}}. Given certain number of locations with
coordinates, $\rho(\vech)$ could lead to a singular covariance matrix (see proof
of Proposition 14 in \citep{Emery2010}), which shows that the previous
assumption of auto-transiogram is not true and thus verifies the conclusion that
a spherical form can not be used to model the auto-transiogram of excursion sets
of GRFs.

Fortunately, the exponential variogram and its derived models are valid
indicator variograms in any Euclidean space \citep{Emery2010}. Specifically, the
exponential variogram could be written as
$\gamma_{kk}(\vech)=\pi_k(1-\pi_k)(1-\exp(-\frac{|\vech|}{a}))$, where
$\pi_k(1-\pi_k)$ is the variance of indicator variable $I_k(\vecx)$. According
to Equation.\ref{eq.tpvar}, an eligible auto-transiogram thus can be given in
Equation.\ref{eq.exponential}.
This conclusion is not surprising considering transition
probabilities are written as an exponential function of transition rates in
continuous-time or in 1D continuous-space Markov chain \citep{Carle1997}. The
{\it memoryless} property of the exponential variogram model, which in a spatial
context states that the value of a geo-referenced variable $Z(\vecx)$ depends
only on its local neighbors, provides a foundation to simplify computations by
reducing the global spatial interactions to local. 

\begin{itemize}
\item {\it Exponential auto-transiogram}
\begin{equation}
\pi_{k|k}(\vech) = 1-(1-\pi_k)\{1-\exp(-\frac{|\vech|}{a})\}
\label{eq.exponential}
\end{equation}
\end{itemize}

According to the connections between the indicator covariogram and spatial
transition probabilities (Equation.\ref{eq.tpcov}), \citet{Carle1996}
reformulated the indicator Kriging system in terms of transition probabilities
to take advantage of the transiogram properties.
An immediate consequence of the discussion in this section is that the
exponential form (Equation.\ref{eq.exponential}) is recommended over the
triangular (Equation.\ref{eq.triangular}), circular
(Equation.\ref{eq.circular}), Gaussian (Equation.\ref{eq.gaussian}) and
spherical (Equation.\ref{eq.spherical}) forms for modeling the transiograms in
the cases of excursion sets of GRFs and eventually building the transition
probability matrix of the reformulated system.
\section{Non-parametric Transiogram Modeling}

Based on the concepts of transiograms, several approaches have been proposed for
categorical spatial data modeling from different perspectives. \citet{Li2007f}
proposed a Markov Chain random field (MCRF) by applying a spatial Markov Chain
in a 2D geographical space. \citet{Allard2011} uses a similar concept, namely the {\it bi-probagram},
which is basically a bivariate joint probability function of lag distances.
\citet{Cao2011}, on the other hand, proposed a redundancy model in categorical
fields to relax the strict conditional independence assumption commonly imposed
in transiogram-based methods. Different from stationary indicator random fields,
such as mosaic random fields, the Boolean random sets and the excursion
set of GRFs discussed in the previous section, transiograms in these models must
only meet basic probability constraints (Equation.\ref{eq.nonnegativity} to
Equation.\ref{eq.sum2one}) and not Matheron's conditions
(Equation.\ref{eq.matheron}). Although this results in more options for valid
transiograms, the joint fitting of transiograms becomes tedious as the number of
classes increases. From another perspective, the shapes of valid transiograms
are usually controlled by a set of parameters including range, sill and
anisotropy values. These basic shapes, however, tend to be over-smoothed and
ignore the small scale effects found in empirical transiogram values. To account
for such effects, new shape parameters and primitives (e.g., trigonometric
functions for periodic effects) are usually imposed on the basic transiogram
models, and oftentimes, this results in dramatic increases in the complexity of
the transiogram models \citep{Li2011} and  makes parameter fitting
computationally infeasible.  In what follows, a Nadaraya-Watson kernel smoothing
regression \citep{Nadaraya1964} based method is proposed for non-parametric
transiogram modeling to address these problems.

\subsection{Kernel Regression for Transiogram Modeling}

Suppose for $\pi_{k'|k}(\vech)$, the transiogram from class $k$ to class $k'$ at
a certain direction, we have empirical transiogram values $p_{k'|k}(\vech_1),
\ldots, p_{k'|k}(\vech_N)$ for lag $\vech_1,\ldots, \vech_N$ respectively. To
compute $\hat{\pi}_{k'|k}(\vech^*)$, the transiogram from class $k$ to
class $k'$ for an arbitrary lag $\vech^*$ in the same direction, one first finds the range 
$[\vech_n,\vech_{n+1}]$ in which $\vech^*$ lies, and then $\hat{\pi}_{k'|k}(\vech^*)$ can be written
as a linear interpolation of empirical transiogram values \citep{Li2010}:
\begin{equation}
\hat{\pi}_{k'|k}(\vech^*)=\frac{p_{k'|k}(\vech_n)|\vech_{n+1}-\vech^*|+p_{k'|k}(\vech_{n+1})|\vech^*-\vech_{n}|}{|\vech_{n+1}-\vech_{n}|}
\label{eq.linear}
\end{equation}
It has been shown that Equation.\ref{eq.linear} meets the required
probability constraints (Equation.\ref{eq.nonnegativity} to
Equation.\ref{eq.sum2one}), but the formulation is rather limited and only
linear effects in transiogram values are taken into account, which is apparently
an unrealistic assumption for real cases.

Using principles of kernel density estimation (Parzen window estimation)
\citep{Silverman1986}, the Nadaraya-Watson kernel smoothing regression
method \citep{Nadaraya1964} has been proposed for non-linear regression, as
non-linear effects can be modeled by carefully selected kernel functions.
This non-parametric kernel techniques have been previously used for
bi-probagram fitting \citep{D2004, Allard2011}. Similarly here, by applying the
Nadaraya-Watson kernel regression, the resulting transiogram model value
$\hat{\pi}_{k'|k}(\vech^*)$, can be given by:
\begin{equation}
\pi_{k'|k}(\vech^*)= \left\{\begin{array}{cc}
\frac{\sum_{n=1}^{N}\kappa(|\vech_n-\vech^*|)p_{k'|k}(\vech_n)}{\sum_{n=1}^{N}\kappa(|\vech_n-\vech^*|)}
& \mbox{if $|\vech^*| \neq 0$}\\
1 & \mbox{if $|\vech^*|=0$ and $k=k'$} \\
0 & \mbox{if $|\vech^*|=0$ and $k\neq k'$} 
\end{array} \right.
\label{eq.kernelregression}
\end{equation} 
where $\kappa(\cdot)$ is a kernel function with bandwidth $r$. Note that
$\hat{\pi}_{k'|k}(\vech^*)$ is not continuous at the origin, and this discontinuity can be
regarded as {\it nugget effect} usually caused by noise and measurement
errors.

As a kernel function, $\kappa(\cdot)$ should satisfy the following requirements:
\begin{itemize}
\item {\it non-negative}
\begin{equation}
\kappa(t) \geq 0; \mbox{, } \forall t\in \mathcal{R}  \nonumber
\label{eq.kernel1}
\end{equation}
\item {\it unit-integral}
\begin{equation}
\int_{-\infty}^{+\infty} \kappa(t) dt = 1 \nonumber
\label{eq.kernel2}
\end{equation}
\item {\it symmetry}
\begin{equation}
\kappa(t) = \kappa(-t) \nonumber
\label{eq.kernel3}
\end{equation}
\end{itemize}

The properties of commonly-used kernel functions (Table.\ref{tab.kernel}),
such as Gaussian, Epanechnikov, Biweight, Triangular, have been studied
extensively \citep{Silverman1986}. Usually the Epanechnikov
kernel tends to generate the smallest square errors if the smoothing parameter
$r$ is chosen correctly. The recently proposed linear interpolation method for
transiogram modeling \citep{Li2010} can be regarded as a special case of
Equation.\ref{eq.kernelregression} if a triangular kernel is selected and for
each $\vech^*$, only the two nearest neighbors are chosen. The Gaussian
kernel, one of the most commonly-used kernel functions, generates the smoothest curves and
thus tends to create smooth category boundaries in the output map. 
Higher values of $r$ (bandwidth of kernel functions), lead to smoother results, and numerous ways have
been proposed to obtain the optimal $r$, including least-squares
cross-validation, likelihood cross-validation, reference to a standard
distribution and subjective choices \citep{Silverman1986}. 

\vspace{.2cm}
\begin{table}
\begin{tabular*}{\textwidth}{c|c}
\hline
Kernel & $\kappa(t)$ where $t=\frac{\Delta h}{r}$\\ \hline
Epanechnikov & $\frac{3}{4}(1-t^2)\vecone_{|t|\le 1}$\\ 
Gaussian & $\frac{1}{\sqrt{2\pi}}\exp\{-t^2\}$\\ 
Biweight & $\frac{15}{16}(1-t^2)^2\vecone_{|t|\le 1}$ \\
Triangular & $(1-|t|)\vecone_{|t|\le 1}$ \\ \hline
\end{tabular*}
\vspace*{.3cm}
\caption{Several typical kernel functions for kernel regression}
\label{tab.kernel}
\end{table}

A proof is given to show that Equation.\ref{eq.kernelregression} always
yields valid transiogram values. 

\begin{proof}
Assume that empirical transiogram values are obtained by exhaustive sampling of
all observed data, thus given a lab $\vech$, these empirical transiograms should
meet basic transiogram constraints, i.e., for a given $\vech_n$, we
have $p_{k'|k}(\vech_n)\geq 0$ and $\sum_{k=1}^{K}p_{k'|k}(\vech_n)=1$.

According to the definition of the transiogram (Equation.\ref{eq.tp})and
calculation of empirical transiogram values $p_{k'|k}(\vech_n)$
(Equation.\ref{eq.empirical}), the values of the transiograms at the origin
(Equation.\ref{eq.behaviors}) is satisfied naturally, i.e., $p_{k'|k}(0)=1$ if
$k=k'$ and $p_{k'|k}(0)=0$ if $k\neq k'$. 

Since we have $\kappa(\cdot)\ge 0$ for a valid kernel function, 
the non-negativity of transiograms is obviously satisfied.

There is only one unknown parameter $r$, for the head class $k$ and a given
$\vech^*$, $\kappa(|\vech_n-\vech^*|)$ will be the same value in all auto- and cross-transiograms. Thus:
\begin{equation}
\sum_{k=1}^{K}p_{k'|k}(\vech^*) =
\sum_{k=1}^{K}\frac{\sum_{n=1}^{N}\kappa(|\vech_n-\vech^*|)p_{k'|k}(\vech_n)}{\sum_{n=1}^{N}\kappa(|\vech_n-\vech^*|)}
= \frac{\sum_{i=1}^{N}\kappa(|\vech_n-\vech^*|)\sum_{k=1}^{K}p_{k'|k}(\vech_n)}{\sum_{n=1}^{N}\kappa(|\vech_n-\vech^*|)}=1
\nonumber
\end{equation} 
\noindent Thus the unit-sum property of transiograms is satisfied.
\end{proof}

As an example with two categories, Figure.\ref{fig.kernel_transiogram} gives a
comparison between the proposed kernel regression method and the linear
interpolation method \citep{Li2010}. Both methods yield valid transiogram
values. Not surprisingly, the linear interpolation method (green solid lines)
only captures the linear effects in transiogram values, and the proposed method
(solid red lines) tends to generate much smoother results by accounting for the
non-linear effects in transiogram values via kernel functions, and the shapes of
the outcome transiograms of the proposed method can be flexibly adjusted through
the parameters of the kernel functions. In contrast to the linear interpolation
method that only works within the range of empirical values, the proposed method
can be used for extrapolation for distances beyond the empirical range.  In
addition, the proposed method is not a exact estimator, which means its output
estimated transiogram values do not always reproduce the empirical transiogram
values (blue circles) as those of the linear interpolation method do. The
discrepancy between the estimated and empirical transiogram values is controlled
by the kernel bandwidth. Particularly, this discrepancy at the origin (when
$|\vech^*|=0$) can actually be taken as the {\it nugget effect} that the linear interpolation
method ignores. To render this proposed non-parametric method operational, the
described procedure has been implemented in Matlab ($ksr2.m$), and integrated with
a toolbox for statistical analysis of categorical spatial data, which is
available at: \url{http://www.geog.ucsb.edu/~cao/research.html}.
\begin{figure}
\scalebox{0.5}{\includegraphics{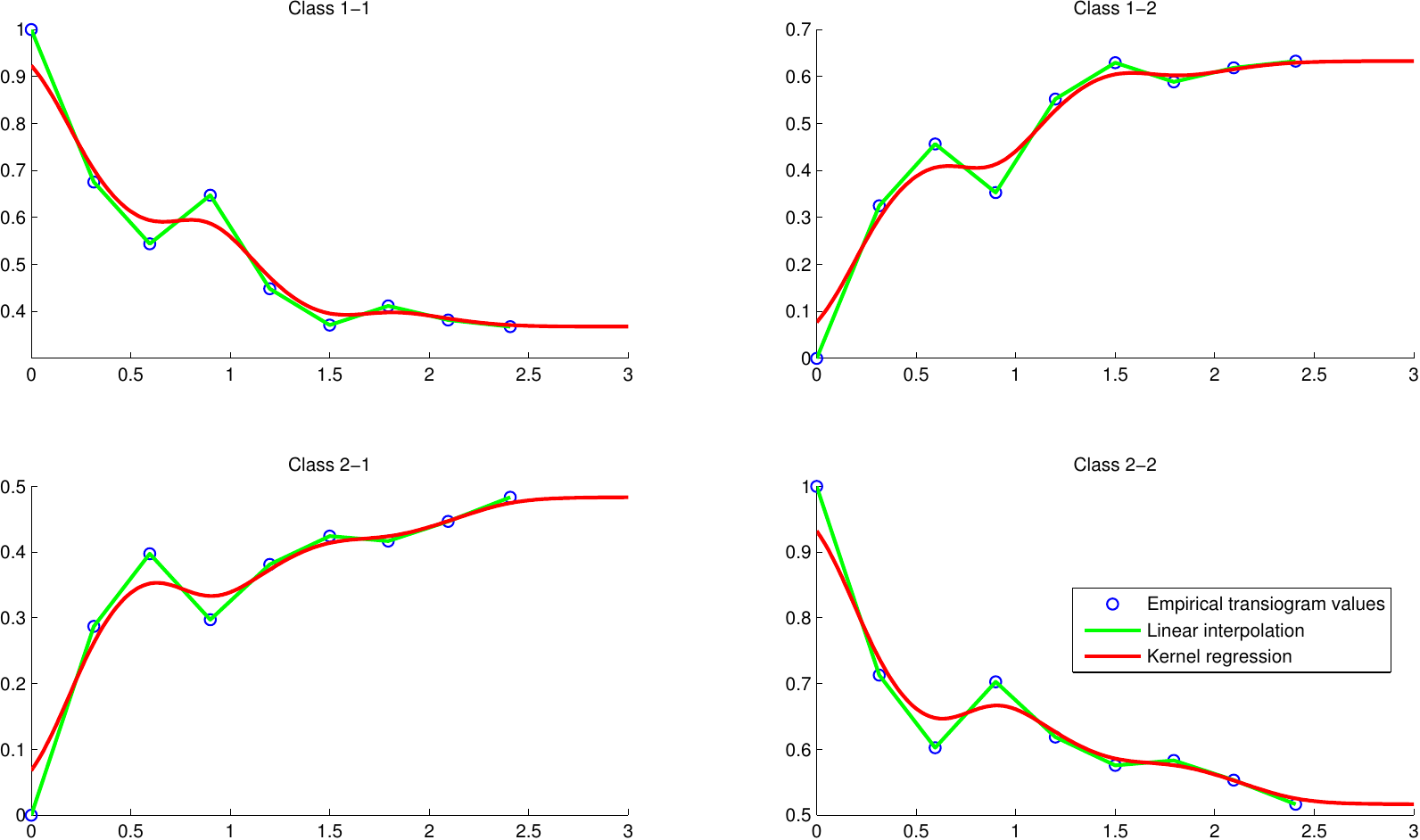}}
\caption{Non-parametric auto- and cross-transiogram models obtained via kernel
regression and linear interpolation}
\label{fig.kernel_transiogram}
\end{figure}

\section{Conclusions}
A collection of statistical methods have been recently proposed for modeling
categorical spatial data based on the concept of spatial transition
probabilities. Limited discussions, however, have given to the properties of this
fairly new spatial continuity measure in the existing literature. In this paper,
three findings on basic properties of transiogram models are reported.  Specifically,
analytical connections between the shape of auto-transiograms near the origin
and the spatial distribution of the associated class label was firstly revealed.
Similar to variograms, it is not every function that can be used as a valid
transiogram model. In the context of stationary indicator random fields, the
eligibility of commonly used basic forms of variograms as transiograms was
investigated particularly for the excursion sets of Gaussian random fields, one
of the most commonly used random sets. It was concluded that the
auto-transiogram of indicators in such a random set can not be Gaussian,
Spherical, Circular or Triangular forms, which are usually used for variogram
modeling. The exponential and its derived forms are recommended for transiogram
modeling in the methods based on stationary indicator random fields. Finally, a
non-parametric transiogram fitting procedure was proposed for the cases where
the assumption of the stationary indicator random fields does not apply, to
capture the small scale effects in transiograms and to address automatic joint
fitting issues of transiograms as the number of classes increases. Compared to
the recent joint fitting methods based on linear interpolation, the proposed
kernel regression-based method is more generic and flexible, and capture the
non-linear effects in empirical transiogram values naturally. A Matlab
implementation of the proposed joint fitting procedure of empirical transiogram
values is also provided. These three findings cover the properties, validity and
modeling of transiograms, and provide a better understanding of the behaviors of
transiogram models, as well as the transiogram-based methods, and thus to avoid
the potential mis-use and misunderstanding of this fairly new spatial continuity
measure in geographical spaces.

\section{Acknowledgments}

We gratefully acknowledge the funding provided by
the National Geospatial-Intelligence Agency (NGA) to support this research.  

\bibliographystyle{tGIS}
\bibliography{main}

\begin{thebibliography}{25}
\providecommand{\natexlab}[1]{#1}

\bibitem[\protect\citeauthoryear{Allard {\itshape{et~al.}}}{2011}]{Allard2011}
Allard, D., D'Or, D., and Froidevaux, R., 2011. An efficient maximum entropy
  approach for categorical variable prediction. {\itshape European Journal of
  Soil Science}, 62 (3), 381--393.

\bibitem[\protect\citeauthoryear{Cao {\itshape{et~al.}}}{2011}]{Cao2011}
Cao, G., Kyriakidis, P.C., and Goodchild, M.F., 2011. {Combining spatial
  transition probabilities for stochastic simulation of categorical fields}.
  {\itshape International Journal of Geographical Information Science}, 25
  (11), 1773--1791.

\bibitem[\protect\citeauthoryear{Carle and Fogg}{1996}]{Carle1996}
Carle, S. and Fogg, G., 1996. {Transition probability-based indicator
  geostatistics}. {\itshape Mathematical Geology}, 28 (4), 453--476.

\bibitem[\protect\citeauthoryear{Carle and Fogg}{1997}]{Carle1997}
Carle, S. and Fogg, G., 1997. {Modeling spatial variability with one and
  multidimensional continuous-lag Markov chains}. {\itshape Mathematical
  Geology}, 29 (7), 891--918.

\bibitem[\protect\citeauthoryear{Chil\`{e}s and Delfiner}{1999}]{Chiles1999}
Chil\`{e}s, J. and Delfiner, P., 1999. {\itshape {Geostatistics: Modeling
  Spatial Uncertainty}}.   Wiley-Interscience.

\bibitem[\protect\citeauthoryear{Deutsch and Journel}{1998}]{Deutsch1998}
Deutsch, C. and Journel, A., 1998. {GSLIB: Geostatistical Software Library and
  User's Guide}. {\itshape New York}, 369.

\bibitem[\protect\citeauthoryear{D'Or and Bogaert}{2004}]{D2004}
D'Or, D. and Bogaert, P., 2004. Spatial prediction of categorical variables
  with the Bayesian maximum entropy approach: the Ooypolder case study.
  {\itshape European Journal of Soil Science}, 55 (4), 763--775.

\bibitem[\protect\citeauthoryear{Emery}{2010}]{Emery2010}
Emery, X., 2010. {On the existence of mosaic and indicator random fields with
  spherical, circular, and triangular variograms}. {\itshape Mathematical
  Geosciences}, 42 (8), 969--984.

\bibitem[\protect\citeauthoryear{Lantuejoul}{2002}]{Lantuejoul2002}
Lantuejoul, C., 2002. {\itshape {Geostatistical Simulation: Models and
  Algorithms}}.   Springer Verlag.

\bibitem[\protect\citeauthoryear{Li}{2007{\natexlab{a}}}]{Li2007g}
Li, W., 2007{\natexlab{a}}. {Transiograms for characterizing spatial
  variability of soil classes}. {\itshape Soil Science Society of America
  Journal}, 71 (3), 881.

\bibitem[\protect\citeauthoryear{Li and Zhang}{2010}]{Li2010}
Li, W. and Zhang, C., 2010. {Linear interpolation and joint model fitting of
  experimental transiograms for Markov chain simulation of categorical spatial
  variables}. {\itshape International Journal of Geographical Information
  Science}, 24 (6), 821--839.

\bibitem[\protect\citeauthoryear{Li}{2006}]{Li2006a}
Li, W., 2006. {Transiogram: A spatial relationship measure for categorical
  data}. {\itshape International Journal of Geographical Information Science},
  20 (6), 693--699.

\bibitem[\protect\citeauthoryear{Li}{2007{\natexlab{b}}}]{Li2007f}
Li, W., 2007{\natexlab{b}}. {Markov chain random fields for estimation of
  categorical variables}. {\itshape Mathematical Geology}, 39 (3), 321--335.

\bibitem[\protect\citeauthoryear{Li and Zhang}{2006}]{Li2006}
Li, W. and Zhang, C., 2006. {A generalized Markov chain approach for
  conditional simulation of categorical variables from grid samples}. {\itshape
  Transactions in GIS}, 10 (4), 651--669.

\bibitem[\protect\citeauthoryear{Li {\itshape{et~al.}}}{2011}]{Li2011}
Li, W., Zhang, C., and Dey, D.K., 2011. Modeling experimental
  cross-transiograms of neighboring landscape categories with the gamma
  distribution. {\itshape International Journal of Geographical Information
  Science}.

\bibitem[\protect\citeauthoryear{Matheron}{1971}]{Matheron1971}
Matheron, G., 1971. {\itshape {The Theory of Regionalized Variables and Its
  Applications}}.  Vol. 5.   \'{E}cole national sup\'{e}rieure des mines.

\bibitem[\protect\citeauthoryear{Matheron}{1993}]{Matheron1993}
Matheron, G., 1993. {Une conjecture sur la covariance d’un ensemble
  al\'{e}atoire}. {\itshape Cahiers de geostatistique}, 107, 107--113.

\bibitem[\protect\citeauthoryear{McGarigal and Marks}{1995}]{McGarigal1995}
McGarigal, K. and Marks, B., {\it FRAGSTATS: spatial pattern analysis program
  for quantifying landscape structure}. , 1995. , Technical report PNW-GTR-351,
  U.S. Department of Agriculture, Forest Service, Pacific Northwest Research
  Station, Portland, OR.

\bibitem[\protect\citeauthoryear{Nadaraya}{1964}]{Nadaraya1964}
Nadaraya, E., 1964. {On estimating regression}. {\itshape Teoriya Veroyatnostei
  i ee Primeneniya}, 9 (1), 157--159.

\bibitem[\protect\citeauthoryear{Silverman}{1986}]{Silverman1986}
Silverman, B., 1986. {\itshape {Density Estimation for Statistics and Data
  Analysis}}.   Chapman \& Hall, CRC.

\bibitem[\protect\citeauthoryear{Smith {\itshape{et~al.}}}{2007}]{Smith2007}
Smith, M.D., Goodchild, M., and Longley, P., 2007. {\itshape {Geospatial
  Analysis: A Comprehensive Guide to Principles, Techniques and Software
  Tools}}. 2nd   Troubador Publishing.

\bibitem[\protect\citeauthoryear{Solow}{1986}]{Solow1986}
Solow, A.R., 1986. {Mapping by simple indicator kriging}. {\itshape
  Mathematical Geology}, 18 (3), 335--352.

\bibitem[\protect\citeauthoryear{Stein}{1999}]{Stein1999}
Stein, M., 1999. {\itshape {Interpolation of Spatial Data: Some Theory for
  Kriging}}.   Springer Verlag.

\bibitem[\protect\citeauthoryear{Tobler}{1970}]{Tobler1970}
Tobler, W., 1970. {A computer movie simulating urban growth in the Detroit
  region}. {\itshape Economic Geography}, 46 (2), 234--240.

\bibitem[\protect\citeauthoryear{Tso and Mather}{2001}]{Tso2001}
Tso, B. and Mather, P., 2001. {\itshape {Classification methods for remotely
  sensed data}}.   Taylor \& Francis.

\end{thebibliography}
\end{document}